\DeclareFontFamily{T1}{calligra}{}
\DeclareFontShape{T1}{calligra}{m}{n}{<->s*[1.44]callig15}{}
\DeclareMathAlphabet\mathcalligra   {T1}{calligra} {m} {n}
\DeclareMathAlphabet\mathzapf       {T1}{pzc} {mb} {it}
\DeclareMathAlphabet\mathchorus     {T1}{qzc} {m} {n}
\DeclareMathAlphabet\mathrsfso      {U}{rsfso}{m}{n}
\newcommand{\bea}{\begin{eqnarray}}
\newcommand{\ena}{\end{eqnarray}}
\newcommand{\bean}{\begin{eqnarray*}}
\newcommand{\enan}{\end{eqnarray*}}
\newcommand{\transv}{\mathrel{\text{\tpitchfork}}}
\newcommand{\tpitchfork}{%
  \vbox{
    \baselineskip\z@skip
    \lineskip-.52ex
    \lineskiplimit\maxdimen
    \m@th
    \ialign{##\crcr\hidewidth\smash{$-$}\hidewidth\crcr$\pitchfork$\crcr}
  }%
}
\newtheorem{thm}{Theorem}
\newtheorem{prop}{Proposition}
\newtheorem{defn}{Definition}
\begin{document}

\title[spacetime foliations]{On spacetime foliations and electromagnetic knots}

\subjclass{78A25 (primary); 57R30, 57M25 (secondary)}

\author[W. Costa e Silva]{W. Costa e Silva}
\address{IMPA, Est. D. Castorina, 110, 22460-320, Rio de Janeiro, RJ, Brazil}

\email{wancossil@gmail.com}

\author[E. Goulart]{E. Goulart}
\address{Departamento de F\'isica e Matem\'atica 
CAP-Universidade Federal de S\~ao Jo\~ao del-Rei, Rod. MG 443, KM 7, 36420-000, Ouro Branco, MG, Brazil}

\email{egoulart2@gmail.com}

\author[J. E. Ottoni]{J. E. Ottoni}
\address{Departamento de F\'isica e Matem\'atica 
CAP-Universidade Federal de S\~ao Jo\~ao del-Rei, Rod. MG 443, KM 7, 36420-000, Ouro Branco, MG, Brazil}

\email{jeottoni@gmail.com }



\begin{abstract}
The present paper has a number of distinct purposes. First is to give a description of a class of electromagnetic knots from the perspective of foliation theory. Knotted solutions are then interpreted in terms of two codimension-2 foliations whose knotted leaves intersect orthogonally everywhere in spacetime. Secondly, we show how the foliations give rise to field lines and how the topological invariants emerge. The machinery used here emphasizes intrinsic properties of the leaves instead of observer dependent quantities - such as a time function, a local rest frame or a  Cauchy hypersurface. Finally, we discuss the celebrated Hopf-Ra\~nada solution in details and stress how the foliation approach may help in future developments of the theory of electromagnetic knots. We conclude with several possible applications, extensions and generalizations.\end{abstract}

\maketitle

\section{Introduction}\label{Section I}

The infusion of topological techniques into field theory has led physicists to investigate knotted configurations in a plethora of physical systems. Possible applications of such knotty textures range from particle physics to astrophysical plasmas and include a variety of soft matter systems such as quantum fluids, liquid crystals, superconductors and ferromagnetism \cite{kedia0}. Interestingly, recent technological advances have permitted laboratory realizations of knots in the vortex lines of a fluid using 3D-printed wings, in the director field in nematic liquid crystals and in the lines of darkness in optical beams \cite{dennis,kleckner,Tkalec}, whereas nontrivial magnetic surfaces in stellarators have now been verified with high accuracy \cite{sunn}.

Mathematically, a knot is the embedding of a circle in a 3-manifold and a collection of such embeddings fulfilling nicely the latter is called a fibration. Examples include the Hopf and Seifert fibrations of the 3-sphere, well known to low-dimensional topologists \cite{ghys,milnor,rolfsen,scott}. These entities are also common place in the theory of topological solitons and play an important role in nonlinear models such as the Faddeev-Niemi model of glueballs \cite{manton}. Less well known, however, is the result that Maxwell equations \textit{in vacuo} admit solutions such that all lines of force twist in space generating knotted pencils of curves which may be linked in a variety of ways. These solutions are called \textit{electromagnetic knots} and the appearance of them in an eminently linear theory is quite remarkable from the theoretical point of view (see \cite{bouwmeester} for a recent review).

The genesis of electromagnetic knots dates back (at least) to Trautman, who investigated aspects of gauge field equations associated with Hopf fiberings \cite{trau}. However, only after Ra\~nada's work series \cite{ratru,ratru1,ran} the area gained the necessary impetus to become an independent field of research. Ra\~nada's technology is based on maps from Minkowski spacetime to the complex plane $(\phi,\theta):\mathbb{R}^{1+3}\rightarrow\mathbb{C}$ which are required to be single-valued at spatial infinity. After identifying the complex plane with the unit sphere via stereographic projection, he effectively obtains $(\phi,\theta):\mathbb{R}\times\mathbb{S}^{3}\rightarrow \mathbb{S}^{2}$, which are nontrivial, since the third homotopy group of the 2-sphere is $\pi_{3}(\mathbb{S}^{2})\simeq\mathbb{Z}$. The Hopf-Ra\~nada solution has been re-discovered in different contexts using, for instance, twistor and spinor techniques \cite{dal}.

Recently, several new electromagnetic knots have been reported. Attempts at generalizing the Hopfion to fields which mimic the Seifert fibration were derived in \cite{at1,at2,at3}, but the authors showed that the topological structure was not preserved, and unraveled with time. More persistent are the solutions obtained using a blend of Bateman's formalism, spinors and complex polynomials. These are null fields by construction and evolve as though field lines are embedded unbreakable filaments in a fluid flowing at the speed of light, in the direction of the Poynting vector. Examples of the latter include all (p,q) torus knots solutions and several types of links. Another approach uses conformal transformations with complex parameters, and a formulation of the latter in terms of quaternions has led to a map between the electromagnetic knots into flat connections of $SU(2)$ gauge theory \cite{hoyos}.

The present paper is motivated by the following verification: most previous attempts in investigating electromagnetic knots were grounded on the concept of field lines. Although this approach is mathematically simpler and aesthetically pleasant it is somehow unsatisfactory from the relativist perspective. The reasons for this are the following: i) field lines require the introduction of artificial structures - such as a time function or a global reference frame - that have no intrinsic relation to either the spacetime geometry or the particular electromagnetic field being discussed. ii) Maxwell's equations written in terms of electric and magnetic fields become highly involved with the kinematical parameters of the observer field and therefore, it seems that the geometry/topology of the lines may look quite different for different observers. iii) for some fields, such as a magnetically dominated solution, it is possible to find an observer such that no electric field is present. In such cases it is meaningless to talk about the topology of field lines since some observers `detect' the field while the others not.

In order to bypass these difficulties we investigate the electromagnetic knots from the perspective of foliation theory. To do so we introduce the concept of a dual foliation and stress some of its elementary properties. From our perspective, knotted configurations become two codimension-2 spacetime foliations which extend the notion of a field line to a field sheet. Such field sheets have no relation to a particular field of observers and have an intrinsic geometrical/topological meaning. We believe this approach is particularly illuminating and paves the way for future related problems such as extension to curved backgrounds, manifolds with nontrivial topologies and may serve as a unifying method in the construction of new knotted solutions. It also provides nontrivial examples of foliations which were not fully appreciated, up to now, by mathematicians. However, our construction solves the problem partially, since it applies to decomposable (simple) electromagnetic fields i.e., those which are the wedge product of 1-forms. Extension to generic fields such as those studied in \cite{at1,at2,at3} is important and will be the focus of future research.

The structure of this paper is as follows: Section \ref{Section II} starts placing definitions in the setting of foliation theory \cite{lawson,camlins,lee} and it is quite general. In section \ref{Section III}, we introduce the electromagnetic fields in vacuum and define the helicities in a covariant way. Section \ref{Section IV} shows how \textit{simple} solutions define dual foliations and how to classify them. Afterwards, fields defined by submersions are introduced and it is also shown how field lines relate to the concept of a dual foliation. Section \ref{Section V} applies the machinery of dual foliations to the case of the Hopf-Ra\~nada solution and reobtains several known results in a fully covariant fashion. We conclude with possible applications and extensions of the theory.  

\section{Distributions and dual foliations}\label{Section II}

Let $M$ be a smooth $m$-dimensional manifold and $D\subseteq TM$ a \textit{rank-}$k$ \textit{distribution} spanned by smooth vector fields $V_{1},...,V_{k}:M \rightarrow TM$. We say that $D$ is \textit{involutive} if the Lie bracket $[V_{i},V_{j}]$ is a section of $D$ for each $i,j=1,...,k$. Alternatively, we can define a rank-$k$ distribution in terms of smooth 1-forms $\omega^{1},...,\omega^{m-k}$ such that, for each $p\in M$
\begin{equation}\label{Ker}
D_{p}=\mbox{ker}\ \omega^{1}(p)\cap ... \cap \mbox{ker}\ \omega^{m-k}(p).
\end{equation}
Any $m-k$ linearly independent 1-forms $\omega^{1},...,\omega^{m-k}$ defined on an open subset $U\subseteq M$ and satisfying (\ref{Ker}) for each $p\in U$ are called \textit{local defining forms for} $D$. It can be shown that $D$ is involutive on $U$ if and only if 
\begin{equation}
d\omega^{i}\wedge\omega^{1}\wedge...\wedge\omega^{m-k}=0
\end{equation}
holds for each $i=1,...,m-k$ and \textit{Frobenius theorem} guarantees that, in this case, the distribution is completely integrable. This means that, when we put together all of the maximal integral manifolds of an involutive rank-$k$ distribution $D$, we obtain a $k$-dimensional foliation of $M$. 

\begin{defn}
By a $k$-dimensional, class $C^{r}$ foliation of an $m$-dimensional manifold $M$ we mean a decomposition of $M$ into a disjoint union of connected subsets $\{\mathrsfso{L}_{\alpha}\}_{\alpha\in A}$, called the \textit{leaves} of the foliation, with the following property: Every point $p\in M$ has a neighborhood $U$ and a system of local, class $C^{r}$ coordinates $x^{a}=(x^{1},...,x^{m}): U\rightarrow \mathbb{R}^{m}$ such that for each leaf $\mathrsfso{L}_{\alpha}$, the components of $U\cap\mathrsfso{L}_{\alpha}$ are described by the equations $x^{k+1}=constant^{k+1},...,x^{m}=constant^{m}$. 
\end{defn}

We call the partition $\mathrsfso{F}=\{\mathrsfso{L}_{\alpha}\}_{\alpha\in A}$, a codimension $m-k$ foliation of $M$ and the coordinates $x^{a}$ are said to be distinguished by the foliation. Note that, according to this definition, $\mathrsfso{F}$ may be described locally in terms of pairs $(U_{i}, f_{i})$, where $U_{i}$ is an open subset in $M$, and $f_{i}:U_{i}\rightarrow \mathbb{R}^{m-k}$, is a submersion i.e., the differential $d_{p}f_{i}:T_{p}M\rightarrow T_{f_{i}(p)}\mathbb {R}^{m-k}$ is surjective in $U_{i}$. Therefore, the plaques of $\mathrsfso{F}$ in $U$ are the connected components of the sets $f^{-1}(c)$, $c\in\mathbb{R}^{m-k}$.

A j-form $\omega\in\Omega^{j}(M)$ is called \textit{simple} (or \textit{totally decomposable}\footnote{In the mathematical literature the second definition is standard; it goes back to
Cartan but is, sometimes, also called \textit{pure} or \textit{totally divisible}.}) if there exist 1-forms $\omega^{1},...,\omega^{j}\in\Omega^{1}(M)$ such that $\omega=\omega^{1}\wedge...\wedge\omega^{j}$. In what follows, we consider involutive distributions on $M$ induced by simple, closed, forms $\omega\in\Omega^{m-k}(M)$ and $\tau\in\Omega^{k}(M)$. The fact that simple closed $j$-forms characterize involutive distributions is obvious, since
\begin{equation}\label{property}
d\omega^{i}\wedge\omega=\omega^{i}\wedge d\omega=0,\quad\quad\quad d\tau^{j}\wedge\tau=\tau^{j}\wedge d\tau=0,
\end{equation}
for each $i=1,...,m-k$ and $j=1,...,k$ with $0\leq k\leq m$. We define these distributions by
\begin{equation}
D_{1}=\mbox{ker}\ \omega^{1}\cap ... \cap \mbox{ker}\ \omega^{m-k},\quad\quad\quad D_{2}=\mbox{ker}\ \tau^{1}\cap ... \cap \mbox{ker}\ \tau^{k},
\end{equation}
and denote the codimension-$(m-k)$ foliation by $\mathrsfso{F}_{1}$ and the codimension-$k$ foliation by $\mathrsfso{F}_{2}$, respectively.

An interesting geometrical situation emerges if $(M,\ g_{M})$ is a manifold with metric $g_{M}$ and we require that $\omega$ and $\tau$ are dual in the sense of the Hodge map
\begin{eqnarray*}
&&\star: \Omega^{m-k}(M)\rightarrow \Omega^{k}(M)\\
&&\omega\wedge\tau={s}<\omega,\star\tau>\varepsilon_{M}
\end{eqnarray*}
with $s$ the sign of the determinant of the matrix representation of $g_{M}$ on any basis, $<.\ ,\ .>$ the inner product induced by $g_{M}$ on the space of $(m-k)$-forms and $\varepsilon_{M}$ the volume-form. This leads us to the following definition.

\begin{defn}
Let $\omega\in\Omega^{m-k}(M)$ and $\tau\in\Omega^{k}(M)$ be simple and closed. If $\tau$ is the image of $\omega$ under the Hodge map i.e., $\tau=\star\omega$, we shall say that $\mathrsfso{F}_{1}$ and $\mathrsfso{F}_{2}$ define a $(m-k,k)$-\textit{dual foliation} of $(M,\ g_{M})$.
\end{defn}
\noindent The main property of such foliations is the fact that they are orthogonal i.e., at each point $p\in M$, if $u\in D_{1}(p)$ and $v\in D_{2}(p)$ then $g_{M}(u,v)=0$ (see, for instance, \cite{gravitation}). Also, the closedness of $\omega$ and $\tau$ implies in
\begin{equation}\label{equation}
d\omega =0\quad\quad\quad d(\star\omega)=0,
\end{equation}
which reduce to Maxwell equations in vacuum when $(M,\ g_{M})$ is ordinary spacetime and $k=2$. Indeed, we shall see that examples of such foliations appear naturally in the study of all simple solutions to Maxwell equations in vacuum. A similar result was first realized by Carter \cite{carter} in the restricted context of force-free plasmas and generalized by Uchida \cite{uchida} and Gralla \& Jacobson \cite{gralla}.

\section{Maxwell equations in vacuum and topological invariants}\label{Section III}
From now on we let $(M,\ g_{M})$ be a smooth 1+3 dimensional spacetime, with metric signature $(+,-,-,-)$, and work in natural units $c=1$. For the sake of concreteness, we assume $M$ to be globally hyperbolic and that its second cohomology group is trivial i.e., $H^{2}_{dR}(M)=0$ \cite{bott tu}. In a local chart $\{x^{a}\}$ $a=0,1,2,3$ the electromagnetic field and its dual $F, \star F\in\Omega^{2}(M)$ read as
\begin{equation}\nonumber
F=\frac{1}{2}F_{ab}\ dx^{a}\wedge dx^{b},\quad\quad \star F=\frac{1}{2}\star F_{ab}\ dx^{a}\wedge dx^{b},
\end{equation}
\noindent where $\star F_{ab}=\frac{1}{2}\varepsilon_{abcd}F^{cd}$ and
\begin{equation}\nonumber
\varepsilon_{abcd}=\sqrt{|g|}\tilde{\varepsilon}_{abcd},\quad\quad \varepsilon^{abcd}=-\frac{1}{\sqrt{|g|}}\tilde{\varepsilon}_{abcd},
\end{equation}
are the Levi-Civita tensors with $g\equiv \mbox{det}\ g_{ab}$ and $\tilde\varepsilon_{0123}=+1$. The two quadratic invariants are defined by
\begin{equation}\nonumber
\Psi=\frac{1}{2}F_{ab}F^{ab},\quad\quad\quad \Phi=\frac{1}{4}\star F_{ab}F^{ab},
\end{equation}
and, unless otherwise specified, we assume all fields to be of class $C^{\infty}$. With the above conventions, a direct computation reveals the algebraic relations
\begin{equation}\label{algeqs}
\star F^{ab}\star F_{bc}- F^{ab} F_{bc}=\Psi\delta^{a}_{\phantom a b},\quad\quad\quad F^{ab}\star F_{bc}=-\Phi\delta^{a}_{\phantom a b},
\end{equation}
and the top-degree identities
\begin{eqnarray}\label{1}
\star F\wedge \star F&=&-F\wedge F=2\Phi\ \varepsilon_{M}\\\label{2}
F\wedge \star F  &=&\star F\wedge  F=\Psi\ \varepsilon_{M}.
\end{eqnarray}
The latter quantities are nothing but the Chern-Pontrjagin densities, well known to gauge theorists \cite{jackiw}.

The source-free electromagnetic field is required to satisfy Maxwell equations in vacuum
\begin{equation}\label{Maxwell}
dF=0,\quad\quad\quad d(\star F)=0,
\end{equation}
which define a system of linear hyperbolic equations with constraints for the fields \cite{Christodoulou,geroch}. Now, as $H^{2}_{dR}(M)=0$, $F$ and $\star F$ will be globally exact, and we may write
\begin{equation}\label{Potential}
F=dA\quad\quad\quad \star F=dC
\end{equation}
with $A, C\in\Omega^{1}(M)$ the corresponding vector potentials. This leads us to the definition of the following 3-forms
\begin{equation}\nonumber
CS_{1}= F\wedge A,\quad\quad\quad CS_{2}= \star F\wedge C,\quad\quad\quad CS_{3}= F\wedge C,\quad\quad\quad CS_{4}= \star F\wedge A.
\end{equation} 
These are Chern-Simons-like terms which satisfy the relations
\begin{equation}
d(CS_{1})=-d(CS_{2})=-2\Phi\ \varepsilon_{M},\quad\quad\quad d(CS_{3})=d(CS_{4})=\Psi\ \varepsilon_{M},
\end{equation}
due to eq.(\ref{Maxwell}) and eq.(\ref{Potential}). Using Stokes theorem one concludes that
\begin{equation}
\int_{\partial\mathcal{V}}CS_{i}=\int_{\mathcal{V}}d(CS_{i})\quad\quad\quad i=1,2,3,4.
\end{equation}
with $\mathcal{V}$ a four-volume bounded by $\partial\mathcal{V}$. As a globally hyperbolic manifold $M$ is homeomorphic to $\mathbb{R}\times\Sigma$ where $\Sigma$ is a 3-manifold \cite{haw}, we may consider a volume of the form $\mathcal{V}=[t_{a},t_{b}]\times \Sigma\subset M$. Assuming that the fields decay sufficiently fast one obtains
\begin{eqnarray}
\mathcal{H}_{1}\equiv\int_{\Sigma}CS_{1}+CS_{2}=constant_{1},\\
\mathcal{H}_{2}\equiv\int_{\Sigma}CS_{3}-CS_{4}=constant_{2},
\end{eqnarray}
meaning that the integrals do not depend on the hypersurface as time evolves. It turns out that the quantities $\mathcal{H}_{1}$ and $\mathcal{H}_{2}$ are topological invariants intimately related to the \textit{electromagnetic helicities} introduced by Ra\~nada and Trueba in \cite{trueba}. This was first noticed by Elbistan \textit{et al} in \cite{elbi}, where the authors also show that the helicity is the moment map of duality acting as an $SO(2)$ group of canonical transformation on the symplectic space of all solutions of the vacuum Maxwell equations.
  
\section{Electromagnetic 2-forms and dual foliations}\label{Section IV}

\subsection{Simple solutions}

At a spacetime point $p$, any nonzero 2-form $\omega\in\Omega^{2}(M)$ is either simple $\omega=\omega^{1}\wedge\omega^{2}$, or can be written as $\omega=\omega^{1}\wedge \omega^{2}+\omega^{3}\wedge \omega^{4}$, with $\omega^{1}\wedge\omega^{2}\wedge\omega^{3}\wedge\omega^{4}\neq0$. In the former case $\mbox{dim}(\mbox{ker}\ \omega)=2$ while in the latter $\mbox{dim}(\mbox{ker}\ \omega)=0$. A necessary and sufficient condition for $\omega$ to be simple is that $\omega\wedge\omega=0$. One then concludes, using eq.(\ref{1}), that simple electromagnetic fields are uniquely defined by the algebraic condition $\Phi=0$. In what follows we deal only with simple solutions to eq.(\ref{Maxwell}) i.e., we assume that the electromagnetic 2-forms may be globally written as
\begin{equation}
F=\omega^{1}\wedge\omega^{2},\quad\quad\quad \star F=\tau^{1}\wedge\tau^{2}.
\end{equation}
The reason for this relies on the following useful proposition:

\begin{prop}
Every non-vanishing simple solution to Maxwell equations in vacuum induce a $(2,2)$-dual foliation of spacetime.
\end{prop}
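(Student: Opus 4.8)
The plan is to verify the hypotheses of the definition of a $(2,2)$-dual foliation one at a time for the pair $(\omega,\tau)=(F,\star F)$, so that in the end only the existence of the two codimension-$2$ foliations and their orthogonality remain. First I would note that the two standing hypotheses pin down $F$ pointwise: writing $F=\omega^{1}\wedge\omega^{2}$ globally, the assumption $F\neq 0$ forces $\omega^{1},\omega^{2}$ to be linearly independent at every $p\in M$, so that $D_{1}=\mbox{ker}\,\omega^{1}\cap\mbox{ker}\,\omega^{2}$ is a genuine rank-$2$ distribution of constant rank (the characteristic distribution of $F$). Closedness of $F$ then supplies integrability for free: $dF=0$ from \eqref{Maxwell} together with the identity \eqref{property} yields $d\omega^{i}\wedge F=\omega^{i}\wedge dF=0$, which is exactly the Frobenius criterion, so $D_{1}$ is involutive and hence completely integrable, producing the codimension-$2$ foliation $\mathrsfso{F}_{1}$.

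The second step is to transfer all of this to the dual, which is where the algebra of the invariants does the work. Since $F$ is simple, $F\wedge F=0$, and the top-degree identity \eqref{1} gives $\star F\wedge\star F=-F\wedge F=2\Phi\,\varepsilon_{M}$; thus the pseudoscalar invariant vanishes, $\Phi=0$, and consequently $\star F\wedge\star F=0$, so $\star F$ is itself simple and may be written $\star F=\tau^{1}\wedge\tau^{2}$. Because the Hodge map is a pointwise linear isomorphism on $\Omega^{2}(M)$, $F\neq 0$ implies $\star F\neq 0$ everywhere, so $\tau^{1},\tau^{2}$ are independent and $D_{2}=\mbox{ker}\,\tau^{1}\cap\mbox{ker}\,\tau^{2}$ is a second rank-$2$ distribution. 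Exactly as before, $d(\star F)=0$ from \eqref{Maxwell} and \eqref{property} give involutivity and hence the foliation $\mathrsfso{F}_{2}$. At this point $(\omega,\tau)=(F,\star F)$ is a pair of simple, closed forms with $\tau=\star\omega$ by construction, so the definition of a $(2,2)$-dual foliation is satisfied, and the orthogonality $g_{M}(u,v)=0$ for $u\in D_{1}(p)$, $v\in D_{2}(p)$ is precisely the stated main property of such foliations inherited from the Hodge pairing.

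I expect the only delicate point to be the passage from pointwise algebra to honest, constant-rank smooth subbundles: one must use non-vanishing of $F$ crucially, since the foliation decomposition and Frobenius theorem both require a distribution of \emph{fixed} rank, and were $F$ (equivalently $\star F$) allowed to vanish on some locus the rank of $D_{1}$ or $D_{2}$ could jump and the leaves degenerate. By contrast, the simplicity of $\star F$, the closedness of both forms, and the duality relation $\tau=\star\omega$ are essentially immediate consequences of \eqref{1}, \eqref{Maxwell} and the algebraic condition $\Phi=0$, so they present no real difficulty; the entire content of the proposition is the observation that simplicity, duality and vacuum Maxwell together force both characteristic distributions to be integrable and mutually orthogonal.
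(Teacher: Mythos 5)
Your proof is correct and takes essentially the same route as the paper's: both construct the two rank-$2$ characteristic distributions $D_{1}=\mbox{ker}\,\omega^{1}\cap\mbox{ker}\,\omega^{2}$ and $D_{2}=\mbox{ker}\,\tau^{1}\cap\mbox{ker}\,\tau^{2}$, obtain involutivity from the Frobenius criterion $d\omega^{i}\wedge F=\omega^{i}\wedge dF=0$ (and likewise for $\star F$) supplied by the vacuum Maxwell equations, and note that duality holds by construction since $\tau=\star\omega$. The only difference is one of bookkeeping: the paper takes the global decompositions of both $F$ and $\star F$ as a standing hypothesis (having already observed in the surrounding text that $\Phi=0$ characterizes simple fields), whereas you derive the simplicity and non-vanishing of $\star F$ from those of $F$ inside the proof via $\star F\wedge\star F=-F\wedge F$ and the pointwise invertibility of the Hodge map, which is a slightly cleaner packaging of the same argument.
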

\begin{proof}
The defining forms $\omega^{1}, \omega^{2}$ and $\tau^{1}, \tau^{2}$ characterize two rank-$2$ distributions
\begin{eqnarray*}
D_{1}&=&\mbox{ker}\ \omega^{1}\cap\mbox{ker}\ \omega^{2}\\
D_{2}&=&\mbox{ker}\ \tau^{1}\cap\mbox{ker}\ \tau^{2}
\end{eqnarray*}
These distributions are automatically involutive since
\begin{equation}\nonumber
d\omega^{1}\wedge F=0,\quad\quad d\omega^{2}\wedge F=0,\quad\quad d\theta^{1}\wedge \star F=0,\quad\quad d\theta^{2}\wedge \star F=0.
\end{equation}
They are also dual as $\star F$ is, by construction, the image of $F$ under the Hodge map, which completes the proof.
\end{proof}
 
Henceforth, we interpret simple solutions in terms of two codimension-2 foliations $\mathrsfso{F}_{1}$ and $\mathrsfso{F}_{2}$, whose leaves are the integral manifolds of $D_{1}$ and $D_{2}$. We adopt the following terminology: the leaves of $\mathrsfso{F}_{1}$ and $\mathrsfso{F}_{2}$ will be called \textit{magnetic leaves} and \textit{electric leaves}, respectively. We can think of these leaves as the worldsheets of the electromagnetic fields while they evolve in time. Here, the orthogonality of the leaves may be seen as a consequence of
\begin{equation}
\star F^{ac}F_{cb}=-\Phi\delta^{a}_{\phantom a b}=0.
\end{equation}
It is convenient to classify $(2,2)$-dual foliations in two classes, according to the signs of the invariants \cite{lichnero}:\\

\begin{enumerate}

\item{Non-null field ($\Psi\neq 0,\ \Phi=0$): $F$ has two distinct principal null directions and we may take $\omega^{1}\wedge\omega^{2}\wedge\tau^{1}\wedge\tau^{2}\neq0$, with\\

\begin{enumerate}	

\item{Magnetically dominated ($\Psi>0$):
\begin{eqnarray*}
\quad\quad\quad\quad\omega^{1}\cdot\omega^{1}<0,\quad\quad\omega^{2}\cdot\omega^{2}<0,\quad\quad \tau^{1}\cdot\tau^{1}<0,\quad\quad \tau^{2}\cdot\tau^{2}=1,
\end{eqnarray*}
with ``$\cdot$'' standing for the scalar product, for brevity. Here, the magnetic leaves are timelike whereas the electric leaves are spacelike.}\\ 

\item{Electrically dominated ($\Psi<0$):
\begin{eqnarray*}
\quad\quad\quad\quad\omega^{1}\cdot\omega^{1}<0,\quad\quad\omega^{2}\cdot\omega^{2}=1,\quad\quad \tau^{1}\cdot\tau^{1}<0,\quad\quad \tau^{2}\cdot\tau^{2}<0.
\end{eqnarray*}
In this case, the role of the leaves are interchanged. Magnetic leaves are spacelike whereas electric leaves are timelike.}\\ 

\end{enumerate}

We note that, for non-null solutions, $\mathrsfso{F}_{1}$ and $\mathrsfso{F}_{2}$ are not only orthogonal, but also transversal $\mathrsfso{F}_{1}\transv\mathrsfso{F}_{2}$ i.e., $T_{p}\mathrsfso{F}_{1}\oplus T_{p}\mathrsfso{F}_{2}=T_{p}M$.}\\

\item{Null field ($\Phi=0,\ \Psi=0$): the 2-form has one ``repeated'' principal null direction; We have:
\begin{eqnarray*}
&& \omega^{1}\cdot\omega^{1}<0,\quad\quad \omega^{2}\cdot\omega^{2}=0,\quad\quad  \tau^{1}\cdot\tau^{1}<0,\quad\quad  \tau^{2}\cdot\tau^{2}=0.
\end{eqnarray*}
with $\omega^{1}\wedge\omega^{2}\wedge\tau^{1}\neq 0$ but $\omega^{2}=\tau^{2}$. In this case, foliations are still orthogonal but $\mbox{dim}(T_{p}\mathrsfso{F}_{1} \oplus T_{p}\mathrsfso{F}_{2})=3$. The leaves have a null vector in common and a result originally due to Robinson \cite{robinson} implies that this vector is geodetic and shear-free.} 

\end{enumerate}

 
\subsection{Field lines}

A globally hyperbolic spacetime admits a congruence $\mathcal{C}_{v}$ of future-oriented timelike lines with unit tangent vector field $v$. Therefore, we can locally split the manifold into time plus space through the orthogonal decomposition of $TM$ on $\mathcal{C}_{v}$. The congruence $\mathcal{C}_{v}$ and the associated (local) 1+3 threading identifies a family of observers on $M$ whose world lines are the congruence curves and whose 4-velocity is $v$. Tensor fields which have no component along $v$ are called spatial with respect to $v$ and the subspace of $T_{p}M$ orthogonal to $v(p)$ is defined by $H_{p}=\mbox{ker}\ g(p)\circ v(p)$. The (1+3) splitting induces the following decompositions of the electromagnetic 2-forms
\begin{equation}
F_{ab}=E_{[a}v_{b]}+\varepsilon_{abcd}B^{c}v^{d}\quad\quad\quad \star F_{ab}=-B_{[a}v_{b]}+\varepsilon_{abcd}E^{c}v^{d}
\end{equation}
with $[ab]\equiv ab -ba$ denoting antisymmetrization and
\begin{equation}
E^{a}\equiv F^{a}_{\phantom a b}v^{b}\quad\quad\quad B^{a}\equiv-\star F^{a}_{\phantom a b}v^{b}
\end{equation}
Note that, according to these definitions, $E(p)$ and $B(p)$ are automatically elements of $H_{p}$ and, in a local Lorentz frame, we have\\
\begin{eqnarray*}
&&E^{1}=F^{10}=F_{01}=\star F^{32}=\star F_{32}\quad\quad\quad\quad B^{1}=F^{32}=F_{32}=\star F^{01}=\star F_{10}\\
&&E^{2}=F^{20}=F_{02}=\star F^{13}=\star F_{13}\quad\quad\quad\quad B^{2}=F^{13}=F_{13}=\star F^{02}=\star F_{20}\\
&&E^{3}=F^{30}=F_{03}=\star F^{21}=\star F_{21}\quad\quad\quad\quad B^{3}=F^{21}=F_{21}=\star F^{03}=\star F_{30}\\
\end{eqnarray*}
We interpret $E(p)$ and $B(p)$ as the electric and magnetic fields measured by the observer $v$ at the point $p$. A simple calculation reveals that the invariants read as
\begin{equation}
\Psi=B^{2}-E^{2}\quad\quad\quad \Phi=E\cdot B
\end{equation}
Notice also that $B\in \mbox{ker}\ F$ whereas $E\in \mbox{ker}\ \star F$ which justifies our terminology for the leaves in the case of simple fields.

The concept of space-filling lines of force appears in the following way. If we are given the observer $v^{a}$ and the vector fields $E^{a}$ and $B^{a}$, we may define their integral lines as
\begin{equation}\label{lines}
\frac{dx^{a}}{d\lambda}=E^{a}(x(\lambda))\quad\quad\quad \frac{dx^{a}}{d\sigma}=B^{a}(x(\sigma))
\end{equation}
with $\lambda,\sigma\in\mathbb{R}$. Although knowledge of these lines always into the qualitative behavior of fields, we need to be careful here.
\begin{enumerate}
\item{ Direction and intensity of field lines are highly dependent on the observer's choice.\\}

\item{Unless we impose additional constraints to $v$, field lines do not necessarily sit in a global hypersurface orthogonal to $v$.\\}

\item{Maxwell's equations written in terms of $E$ and $B$ become highly involved with the kinematical parameters of $v$ and therefore, the geometry/topology of the lines may look quite different for different observers.\\}
\end{enumerate}

\noindent A more tractable situation occurs if we impose the condition $v_{[a}\nabla_{b}v_{c]}=0$. In this case, the observer field is called \textit{Gaussian} and the subspaces $H_{p}$ mesh themselves to form a global space-like hypersurface $\Sigma$, orthogonal to $v^{a}$ everywhere. A natural question which arises here is the following: What is the relation between the dual foliations $\mathrsfso{F}_{1}$, $\mathrsfso{F}_{2}$, the hypersurface $\Sigma$ and the lines of force? The answer is given by the following theorem:

\begin{thm} \label{teo1}
Let $\mathrsfso{F}_{1}$ and $\mathrsfso{F}_{2}$ be the dual foliations induced by $F$ and $\star F$, respectively. If $\Sigma$ is a space-like hypersurface orthogonal to a Gaussian observer $v$, the lines of force of $B$ and $E$ as measured by $v$ coincide with the transversal intersections $\mathrsfso{F}_{1}\transv\Sigma$, $\mathrsfso{F}_{2}\transv\Sigma$, respectively.\\
\end{thm}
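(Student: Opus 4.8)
The plan is to reduce the statement to a pointwise identity between tangent spaces and then invoke uniqueness of integral curves. Both the lines of force of $B$ and the connected components of $\mathrsfso{F}_{1}\cap\Sigma$ are integral curves of one-dimensional distributions on $\Sigma$, and an integral curve of a line field is uniquely determined by any one of its points. It therefore suffices to show that the two line fields agree, i.e. that at every $p\in\Sigma$
\[
T_{p}(\mathrsfso{F}_{1}\cap\Sigma)=\mbox{span}\{B(p)\},\qquad T_{p}(\mathrsfso{F}_{2}\cap\Sigma)=\mbox{span}\{E(p)\}.
\]
Since $v$ is Gaussian, the distribution $H=\mbox{ker}\ g\circ v$ is integrable and $\Sigma$ is its leaf through $p$, so $T_{p}\Sigma=H_{p}$; likewise $T_{p}\mathrsfso{F}_{1}=D_{1}(p)$ and $T_{p}\mathrsfso{F}_{2}=D_{2}(p)$. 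The claim thus becomes the purely algebraic statement $D_{1}(p)\cap H_{p}=\mbox{span}\{B(p)\}$ and $D_{2}(p)\cap H_{p}=\mbox{span}\{E(p)\}$.

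First I would verify the containments $B(p)\in D_{1}(p)\cap H_{p}$ and $E(p)\in D_{2}(p)\cap H_{p}$. Orthogonality to $v$ is immediate from antisymmetry, since $v_{a}E^{a}=F_{ab}v^{a}v^{b}=0$ and $v_{a}B^{a}=-\star F_{ab}v^{a}v^{b}=0$, so $E,B\in H_{p}$. For membership in the distributions I would use the algebraic relations (\ref{algeqs}) specialized to $\Phi=0$: from $F^{a}_{\phantom a b}\star F^{b}_{\phantom b c}=-\Phi\delta^{a}_{\phantom a c}=0$ one gets $F^{a}_{\phantom a b}B^{b}=-F^{a}_{\phantom a b}\star F^{b}_{\phantom b c}v^{c}=0$, i.e. $\iota_{B}F=0$, so $B\in\mbox{ker}\ F=D_{1}$; symmetrically $\star F^{a}_{\phantom a b}F^{b}_{\phantom b c}=0$ gives $\iota_{E}\star F=0$ and $E\in\mbox{ker}\ \star F=D_{2}$. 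This already places $B$ and $E$ tangent to the respective intersection sets.

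The core step is the dimension count, and this is where I expect the main obstacle. In a four-manifold, with $\dim D_{1}=2$ and $\dim H_{p}=3$, the intersection $D_{1}\cap H_{p}$ has dimension exactly $2+3-4=1$ precisely when $D_{1}\not\subseteq H_{p}$, which is exactly the transversality $\mathrsfso{F}_{1}\transv\Sigma$; otherwise $D_{1}\subseteq H_{p}$ and the leaf lies tangent to $\Sigma$. The clean way to control this is the equivalence
\[
B(p)=0\ \Longleftrightarrow\ \star F_{ab}v^{b}=0\ \Longleftrightarrow\ v(p)\in D_{2}(p)=D_{1}(p)^{\perp}\ \Longleftrightarrow\ D_{1}(p)\subseteq H_{p},
\]
valid for non-null fields, where the paper has already established $T_{p}\mathrsfso{F}_{1}\oplus T_{p}\mathrsfso{F}_{2}=T_{p}M$ orthogonally, so that $D_{2}=D_{1}^{\perp}$. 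Hence, whenever the observer measures a nonzero magnetic field $B(p)\neq0$, transversality holds, $\dim(D_{1}\cap H_{p})=1$, and the nonzero vector $B(p)\in D_{1}\cap H_{p}$ must span it; the identical argument with $F$ and $\star F$ (and $E$, $B$) interchanged yields $D_{2}\cap H_{p}=\mbox{span}\{E(p)\}$. Uniqueness of integral curves of the resulting nonvanishing line fields then forces the field lines and the intersection curves to coincide.

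The delicate points I would flag are the following. The identification $D_{1}^{\perp}=D_{2}$ is transparent only in the non-null case, so the null field ($\Psi=\Phi=0$, where $D_{1}$ and $D_{2}$ share a null direction and $\mbox{dim}(D_{1}+D_{2})=3$) must be handled separately, checking directly that $B$ and $E$ are the unique spatial directions left in the respective kernels. One should also be explicit that \emph{coincide} is meant for the unparametrized curves, since the affine parameters $\lambda,\sigma$ of (\ref{lines}) generally differ from an arclength parameter along $\mathrsfso{F}_{i}\cap\Sigma$. Finally, the Gaussian hypothesis enters not through the pointwise algebra but to guarantee that $\Sigma$ exists globally as a genuine hypersurface for the leaves to meet, so that the transversal intersections are well defined in the first place.
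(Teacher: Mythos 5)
Your proposal is correct, and it shares the paper's overall strategy --- reduce the theorem to the pointwise linear-algebra statement $D_{1}\cap H_{p}=\mbox{span}\{B(p)\}$, $D_{2}\cap H_{p}=\mbox{span}\{E(p)\}$ --- but it executes the central step by a genuinely different route. The paper parametrizes $D_{1}$ as the span of the raised defining forms $\tau_{1},\tau_{2}$ of $\star F$ (implicitly using that the image of $\star F$ equals $\ker F$, which rests on the same identity $F^{ab}\star F_{bc}=-\Phi\,\delta^{a}_{\phantom{a}c}=0$ of eq.(\ref{algeqs}) that you invoke), writes a general element $U=\lambda_{1}\tau_{1}+\lambda_{2}\tau_{2}$, imposes $U\cdot v=0$, and solves explicitly, recognizing the result as a multiple of $B=(\tau_{1}\cdot v)\tau_{2}-(\tau_{2}\cdot v)\tau_{1}$; the degenerate (dominated-field) situations are then handled by a case analysis on whether $\tau_{2}\cdot v$ and $\omega_{2}\cdot v$ vanish. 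You instead characterize $D_{1}=\ker F$, verify the membership $B\in\ker F\cap H_{p}$ directly from the algebraic identities, and replace the explicit solving by the dimension count $\dim(D_{1}\cap H_{p})=1\iff D_{1}\not\subseteq H_{p}$ together with the criterion $B(p)=0\iff v\in D_{2}=D_{1}^{\perp}\iff D_{1}\subseteq H_{p}$. Your version is more invariant (no choice of defining forms is needed) and turns the equivalence ``field line exists iff the intersection is transversal,'' which the paper only asserts in its closing sentence, into the actual mechanism of the proof; the paper's version buys explicit formulas for $B$ and $E$ in terms of the defining forms and exhibits concretely the observers that annihilate $E$ or $B$ for non-null dominated fields. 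You are also right, and the paper is silent, on the point that ``coincide'' must mean coincidence of unparametrized curves, and that the conclusion passes through uniqueness of integral curves of a nonvanishing line field.

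One remark: the caveat you flag about the null case is unnecessary. The identification $D_{2}=D_{1}^{\perp}$ does not require $T_{p}\mathrsfso{F}_{1}\oplus T_{p}\mathrsfso{F}_{2}=T_{p}M$: orthogonality $D_{1}\perp D_{2}$ follows from $\star F^{ac}F_{cb}=0$ alone (expanding in the defining forms forces all products $\tau_{i}\cdot\omega_{j}$ to vanish), and nondegeneracy of $g_{M}$ gives $\dim D_{1}^{\perp}=2=\dim D_{2}$, hence equality --- even though in the null case $D_{1}\cap D_{2}\neq\{0\}$, since in Lorentzian signature an orthogonal complement need not be a transverse complement. With this observation your argument runs uniformly for null and non-null fields, with no separate treatment required.
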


\begin{proof}
Write $F^{ab}=\omega_{1}^{[a}\omega_{2}^{b]}$ and $\star F^{ab}=\tau_{1}^{[a}\tau_{2}^{b]}$ for the contravariant components of the 2-forms $F=\omega^{1}\wedge\omega^{2}$ and $\star F=\tau^{1}\wedge\tau^{2}$, respectively. The magnetic and electric fields are given by
\begin{equation}\nonumber
B=(\tau_{1}\cdot v)\tau_{2}-(\tau_{2}\cdot v)\tau_{1},\quad\quad\quad E=(\omega_{2}\cdot v)\omega_{1}-(\omega_{1}\cdot v)\omega_{2}.
\end{equation}
\noindent At a point $p\in M$, the intersections $D_{1}\cap H_{p}$ and $D_{2}\cap H_{p}$ are determined by vectors $U=\lambda_{1}\tau_{1}+\lambda_{2}\tau_{2} \in D_{1}$ and $V=\chi_{1}\omega_{1}+\chi_{2}\omega_{2}\in D_{2}$, such that 
\begin{equation}
\lambda_{1}(\tau_{1}\cdot v)+\lambda_{2}(\tau_{2}\cdot v)=0,\quad\quad\quad \chi_{1}(\omega_{1}\cdot v)+\chi_{2}(\omega_{2}\cdot v)=0.
\end{equation}
\noindent Consider first the case $\tau_{2}\cdot v \neq 0$ and $\omega_{2}\cdot v \neq 0$. One obtains
\begin{equation}
U=-\frac{\lambda_{1}}{\tau_{2}\cdot v}B,\quad\quad\quad V=\frac{\chi_{1}}{\omega_{2}\cdot v}E.
\end{equation}
Therefore, the intersections $D_{1}\cap H_{p}$ and $D_{2}\cap H_{p}$ are one-dimensional subspaces of $H_{p}$ spanned by the vectors $B$ and $E$, respectively. We then have
\begin{equation}
D_{1}\oplus H_{p}=T_{p}M,\quad\quad\quad D_{2}\oplus H_{p}=T_{p}M.
\end{equation}
implying that the intersections are transversal. This is always the case for null fields and the general situation for non-null fields. For a magnetically dominated field, however, there exist an observer such that $\tau_{2}\cdot v \neq 0$ and $\omega_{2}\cdot v = 0$. For this observer one obtains
\begin{equation}
D_{1}\oplus H_{p}=T_{p}M,\quad\quad\quad D_{2}\oplus H_{p}=H_{p},
\end{equation}
which implies that $D_{2} \subseteq H_{p}$ and no electric field is present in the rest frame. Similarly, for the electric dominated case, there exist an observer such that
$\tau_{2}\cdot v = 0$ and $\omega_{2}\cdot v \neq 0$ and
\begin{equation}
D_{1}\oplus H_{p}=H_{p},\quad\quad\quad D_{2}\oplus H_{p}=T_{p}M,
\end{equation}
implying that $D_{1} \subseteq H_{p}$ and no electric field is present in the rest frame. If we extend these local results to the whole of $\Sigma$, we then conclude that the integral lines of $B$ ($E$) will exist if and only if $\mathrsfso{F}_{1}\transv\Sigma$ ($\mathrsfso{F}_{2}\transv\Sigma)$.
\end{proof}

\subsection{Fields defined by manifold mappings}

It is clear from definitions 1 and 2 that every $(2,2)$-dual foliation may be described locally in terms of a pair of spacetime submersions $\phi$, $\theta:M\rightarrow \mathbb{R}^{2}$. This means that there exist open neighborhoods $U\subseteq M$ such that magnetic leaves are described by the sets $\phi^{-1}(c)$, while the electric leaves are described by $\theta^{-1}(d)$ for $c,d\in\mathbb{R}^{2}$. 

We now turn ourselves to a different question: When does a pair of maps (preferably submersions) 
\begin{equation}\nonumber
\phi,\ \theta: (M,\ g_{M})\rightarrow (N,\ h_{N}),
\end{equation} for a bidimensional Riemannian manifold $(N,\ h_{N})$, define a (2,2)-dual foliation of $M$? The answer to this question is highly nontrivial since it depends firstly on the topologies of $M$ and $N$ and secondly on a solution to a system of fully nonlinear partial differential equations for the maps.

Note, however, that the pullbacks $\phi^{*}\varepsilon_{N}\in\Omega^{2}(M)$ and $\theta^{*}\varepsilon_{N}\in\Omega^{2}(M)$ of the volume form $\varepsilon_{N}$ are automatically simple and closed. Simplicity results from elementary algebraic reasons while closedness follows from the commutativity of $d$ with manifold mappings i.e., 
\begin{equation}\nonumber
d(\phi^{*}\varepsilon_{N})=\phi^{*}d\varepsilon_{N}=0,\quad\quad d(\theta^{*}\varepsilon_{N})=\theta^{*}d\varepsilon_{N}=0,
\end{equation}
since $\varepsilon_{N}$ is a top-degree form. Therefore, if we manage to find maps satisfying the \textit{dual condition}
\begin{equation}\label{dual condition}
\star(\phi^{*}\varepsilon_{N})=-\theta^{*}\varepsilon_{N}, \quad\quad\quad \star(\theta^{*}\varepsilon_{N})=\phi^{*}\varepsilon_{N},
\end{equation}
and define the electromagnetic 2-forms as
\begin{equation}
F\equiv -\phi^{*}\varepsilon_{N},\quad\quad\quad \star F\equiv \theta^{*}\varepsilon_{N},
\end{equation}
we do get a simple solution to Maxwell equations in vacuum, and consequently a (2,2)-dual foliation of spacetime.

In local charts $x^{a}:M\rightarrow\mathbb{R}^{1+3}$ and $y^{\alpha}:N\rightarrow\mathbb{R}^{2}$ the maps read as
\begin{equation}\nonumber
y^{\alpha}=\phi^{\alpha}(x)\quad\quad\quad y^{\alpha}=\theta^{\alpha}(x)
\end{equation}
with $\alpha,\beta=1,2$ while the electromagnetic fields take the form
\begin{eqnarray*}
F_{ab}&=&-\varepsilon_{\alpha\beta}(\phi(x))\partial_{a}\phi^{\alpha}\partial_{b}\phi^{\beta}\\
\star F_{ab}&=&+\varepsilon_{\alpha\beta}(\theta(x))\partial_{a}\theta^{\alpha}\partial_{b}\theta^{\beta}
\end{eqnarray*}
with $\varepsilon_{\alpha\beta}$ the Levi-Civita tensor compatible with $h_{N}$. The dual condition (\ref{dual condition}) then becomes
\begin{equation}
\varepsilon^{abcd}\varepsilon_{\alpha\beta}(\phi(x))\partial_{c}\phi^{\alpha}\partial_{d}\phi^{\beta}=-\varepsilon_{\alpha\beta}(\theta(x))\partial_{a}\theta^{\alpha}\partial_{b}\theta^{\beta}
\end{equation}
which are highly nonlinear equations. To the best of our knowledge no general method for solving equations of this type exists in literature (see \cite{Dacorogna}, however, for several results in the context of the so-called pullback equation for differential forms). Nevertheless, it is clear from this construction that, if a global solution exists, the corresponding magnetic/electric leaves will be described by the inverse images $\phi^{-1}(c)$ and $\theta^{-1}(d)$ for $c,d\in N$. In the case of submersive maps i.e., when 
\begin{eqnarray*}
&&d\phi_{p}:T_{p}M\rightarrow T_{\phi(p)}N\\
&&d\theta_{p}:T_{p}M\rightarrow T_{\theta(p)}N
\end{eqnarray*}
are surjective, the preimage theorem guarantees that the corresponding leaves (inverse images) are 2-dimensional submanifolds nicely embedded in $M$. In the case when $\phi,\  \theta$ have critical values, simple solutions to Maxwell equations may still exist. However, the inverse images may be rather complicated subsets of $M$.

As a final remark, we note that a volume form $\varepsilon_{N}$ has no local structure in the sense that it is not possible on small open sets to distinguish between itself and the volume form on Euclidean space \cite{kobayashi}, we can always choose local coordinates $(y^{1},y^{2})$ in $N$ such that $\varepsilon_{N}=dy^{1}\wedge dy^{2}$. If this choice is made, the above formalism reduces to the formalism of covariant Euler potentials described in \cite{uchida}.\\

Some important fields which can be obtained in this way are the following:\\
\begin{enumerate}
\item{\textbf{Spherically symmetric field}: Write $(t,r,\xi_{1},\xi_{2})$ for spherical coordinates in $\mathbb{R}^{1+3}$ and $(R,\Theta)$ for polar coordinates in $\mathbb{R}^{2}$. The line elements read as
\begin{equation}\nonumber
ds^{2}=dt^{2}-dr^{2}-r^{2}d\xi_{1}^{2}-r^{2}\mbox{sin}^{2}\xi_{1} d\xi_{2}^{2}
\end{equation}
\begin{equation}\nonumber
d\ell^{2}=dR^{2}+R^{2}d\Theta^{2}
\end{equation}
whereas the corresponding volume forms are given by
\begin{equation}\nonumber
\varepsilon_{M}=r^{2}\mbox{sin}\xi_{1}\ dt\wedge dr\wedge d\xi_{1}\wedge d\xi_{2} \quad\quad\quad \varepsilon_{N}=R\ dR\wedge d\Theta
\end{equation}
Consider the maps $\phi,\ \theta:\mathbb{R}^{1+3}\backslash\{r=0\}\rightarrow\mathbb{R}^{2}$, defined by:
\begin{equation}\nonumber
       \phi = 
        \begin{cases}
            R=r^{-1/2} & \\
            \Theta=-2et& 
        \end{cases}
    \quad\quad\quad 
    \theta = 
        \begin{cases}
            R=(\mbox{cos}\xi_{1})^{1/2} & \\
            \Theta=2e\xi_{2} & 
        \end{cases}
    \end{equation}
with $e\in\mathbb{R}$. The pullbacks read as
\begin{equation}\nonumber
\phi^{\ast}\varepsilon_{N}= -\frac{e}{r^{2}}\ dt\wedge dr,\quad \quad\quad\quad \theta^{\ast}\varepsilon_{N}=-e\ \mbox{sin}\xi_{1}\ d\xi_{1}\wedge d\xi_{2},
\end{equation}
which obviously satisfy the dual condition (\ref{dual condition}) and Maxwell's equations (\ref{Maxwell}). Here, magnetic leaves are spacelike 2-spheres centered at the origin (preimages of $\phi$) whereas the electric leaves are timelike 2 half-planes given by the equations $\xi_{1}=constant_{1},\ \xi_{2}=constant_{2}$ (preimages of $\theta$). Since all the magnetic leaves are submanifolds of $\mathbb{R}^{3}$ they can not intersect  the latter transversely. As a consequence of Theorem \ref{teo1} we conclude that no magnetic lines of force are present in this rest frame. In contrast, all electric leaves are transversal to $\mathbb{R}^{3}$. Theorem \ref{teo1} then implies that electric lines of force pointing in the radial direction are present, as expected. Observe that the situation drastically changes for a different observer, although the geometry/topology of the leaves remains intact.} \\  
\item{\textbf{Plane waves}: Write $(t,x,y,z)$ and $(X,Y)$ for cartesian coordinates in $\mathbb{R}^{1+3}$ and $\mathbb{R}^{2}$, respectively. The line elements read as
\begin{equation}\nonumber
ds^{2}=dt^{2}-dx^{2}-dy^{2}- dz^{2}
\end{equation}
\begin{equation}\nonumber
d\ell^{2}=dX^{2}+dY^{2}
\end{equation}
whereas the corresponding volume forms are given by
\begin{equation}\nonumber
\varepsilon_{M}= dt\wedge dx\wedge dy\wedge dz \quad\quad\quad \varepsilon_{N}=dX\wedge dY. 
\end{equation}
Consider the almost submersions $\phi,\ \theta:\mathbb{R}^{1+3}\rightarrow\mathbb{R}^{2}$:
\begin{equation}\nonumber
       \phi = 
        \begin{cases}
            X=E_{0}x & \\
            Y=\mbox{sin}[(kz-\omega t)]/k & 
        \end{cases}
    \quad\quad\quad 
    \theta = 
        \begin{cases}
            X=E_{0}y & \\
            Y=\mbox{sin}[(kz-\omega t)]/k& 
        \end{cases}
    \end{equation}
with $k,\omega, E_{0}>0$ and $\omega/k=1$. These maps are submersive, except at the zero measure set $\{ (t,x,y,z)\in\mathbb{R}^{1+3}|\ kz-\omega t = (n+1)\pi/2,\ n\in\mathbb{Z}\}$, and the pullbacks read as
\begin{eqnarray*}
\phi^{\ast}\varepsilon_{N}&=&E_{0}\mbox{cos}(kz-\omega t)dx\wedge d(z-t)\\
\theta^{\ast}\varepsilon_{N}&=&E_{0}\mbox{cos}(kz-\omega t)dy\wedge d(z-t)
\end{eqnarray*}
They also satisfy the dual condition (\ref{dual condition}) and Maxwell's equations (\ref{Maxwell}). \noindent Now, both magnetic and electric leaves are  2-dimensional planes containing a null-line. The magnetic leaves are given by the sets $x=constant_1 , z=t+constant_2$ and the electric leaves are given by $y=constant_3,z=t+constant_4$. The dual foliation associated to the plane wave, being a null solution, necessarily intersect the rest frame transversely. This means that both magnetic and electric lines always will be present.}
\end{enumerate}

\section{The Hopf-Ra\~nada solution: an example of a (2,2) linked dual foliation}\label{Section V}

In this section we apply the formalism of dual foliations to the Hopf-Ra\~nada solution. Although, strictly speaking, the solution is not knotted it is the simplest known solution endowed with a nontrivial topology: all field lines are linked once. However, most properties described here can be easily generalized to more complex knotted solutions, such as those described in \cite{dal, de klerk 1, dennis, kedia}.

\subsection{Orthogonal Hopf fiberings of $\mathbb{S}^{3}$}

To begin with, let $\mathbb{S}^{3}=\{(u_{1}+iu_{2},u_{3}+iu_{4})\in\mathbb{C}^{2}|u_{1}^{2}+u_{2}^{2}+u_{3}^{2}+u_{4}^{2}=1\}$ and interpret the 2-sphere as the compactified complex plane $\mathbb{S}^{2}\cong\mathbb{C}\cup\infty$ by introducing a complex coordinate $w$ via stereographic projection from the north pole onto the plane through the equator. Consider the set of surjective maps
\begin{eqnarray*}
&&\Pi_{k}:\mathbb{S}^{3}\rightarrow\mathbb{S}^{2}\quad\quad\quad k=1,2,3\\
&&(u_{1}+iu_{2},u_{3}+iu_{4})\mapsto w=\Pi_{k}(u_{1},u_{2},u_{3},u_{4})
\end{eqnarray*}
defined by:
\begin{equation} 
\Pi_{1}=\frac{u_{1}+iu_{2}}{u_{3}+iu_{4}},\quad\quad\quad \Pi_{2}=\frac{u_{2}+iu_{3}}{u_{1}+iu_{4}},\quad\quad\quad \Pi_{3}=\frac{u_{3}+iu_{1}}{u_{2}+iu_{4}}.
\end{equation}
Note that they are related by the cyclic permutations $u_{1}\rightarrow u_{2}\rightarrow u_{3}$, keeping the fourth coordinate $u_{4}$ fixed. $\Pi_{k}$ define three proper surjective submersions, since the differentials $d_{u}\Pi_{k}:T_{u}\mathbb{S}^{3}\rightarrow T_{\Pi_{k}(u)}\mathbb{S}^{2}$ have maximal rank for all $u\in\mathbb{S}^{3}$ and inverse images of compact subsets are compact. Therefore, due to Ehresmann fibration theorem, the maps $\Pi_{k}$ define three locally trivial fibrations $\mathbb{S}^{1}\hookrightarrow\mathbb{S}^{3}\xrightarrow{\Pi_{k}}\mathbb{S}^{2}$. We refer the reader to \cite{trueba1} for a nice derivation in terms of Lie groups.

The normalized area 2-form $\varepsilon_{\mathbb{S}^{2}}\in\Omega^{2}(\mathbb{S}^{2})$ reads as
\begin{equation}\label{area}
\varepsilon_{\mathbb{S}^{2}}=\frac{1}{2\pi i}\frac{dw\wedge d\overline{w}}{(1+w\overline{w})^{2}},
\end{equation} 
from which we construct the simple, closed, 2-forms $\mathfrak{f}^{k}\in \Omega(\mathbb{S}^{3})$, in terms of the following pullbacks:
\begin{equation}\nonumber
\mathfrak{f}^{k}\equiv\Pi_{k}^{*}\varepsilon_{\mathbb{S}^{2}}
\end{equation}
with $k=1,2,3$. A tedious, though straightforward, calculation gives\\
\begin{eqnarray*}
\mathfrak{f}^{1}&=&\left[-du_{1}\wedge du_{2}-(u_{2}/u_{4})du_{2}\wedge du_{3}+(u_{1}/u_{4})du_{3}\wedge du_{1}\right]/\pi,\\\\
\mathfrak{f}^{2}&=&\left[(u_{2}/u_{4})du_{1}\wedge du_{2}-du_{2}\wedge du_{3}-(u_{3}/u_{4})du_{3}\wedge du_{1}\right]/\pi,\\\\
\mathfrak{f}^{3}&=&\left[-(u_{1}/u_{4})du_{1}\wedge du_{2}+(u_{3}/u_{4})du_{2}\wedge du_{3}-du_{3}\wedge du_{1}\right]/\pi.\\
\end{eqnarray*}

From these pulled-back forms there follow the 1-forms $X^{k}\in\Omega^{1}(\mathbb{S}^{3})$
\begin{equation}\nonumber
X^{k}=\star \mathfrak{f}^{k},
\end{equation}
with $\star$ the Hodge dualization in $\mathbb{S}^{3}$. We leave for the reader to verify that they are automatically orthonormalized, i.e.
\begin{equation}
X^{i}\cdot X^{j}=\frac{1}{\pi^{2}}\delta^{ij}
\end{equation}
where $i,j=1,2,3$ "$\cdot$" is the scalar product in $\mathbb{S}^{3}$, and that
\begin{equation}\label{simp}
\mathfrak{f}^{1}=-\pi X^{2}\wedge X^{3}\quad\quad\quad \mathfrak{f}^{2}=\pi X^{1}\wedge X^{3}.
\end{equation}
Thus, the local defining forms for $\mathfrak{f}^{1}$ and $\mathfrak{f}^{2}$ are not linearly independent. This fact will play an important role in the next section. As a final remark, we note that composition of $X^{k}$ with the contravariant metric gives three orthogonal vector fields in $\mathbb{S}^{3}$, and it is easy to realize that the preimages $\Pi_{k}^{-1}(c)$, for $c\in\mathbb{S}^{2}$, coincide with the integral lines of these fields. As a consequence, the fibrations defined by $\Pi_{k}$ are orthogonal everywhere. Furthermore, as the maps $\Pi_{k}$ are nothing but Hopf maps, the fibres of each fibration are all linked once.

\subsection{From spacetime to the 3-sphere}

In order to produce electromagnetic fields from the fibrations described above we need to compose the submersions $\Pi_{k}$ with another map $\Omega:\mathbb{R}^{1+3}\rightarrow\mathbb{S}^{3}$. This map is engineered in such a way that:\\
\begin{enumerate}

\item{It effectively compactifies spacetime i.e., all points at spatial infinity are mapped to a single point in $\mathbb{S}^{3}$.}\\

\item{It is a submersion i.e., the differential $d_{p}\Omega: T_{p}\mathbb{R}^{1+3}\rightarrow T_{\Omega(p)}\mathbb{S}^{3}$ is surjective.}\\

\item{It satisfies the relation $\star(\Omega^{*}\mathfrak{f}^{1})=-\Omega^{*}\mathfrak{f}^{2}$\label{frak}.}\\
\end{enumerate}

\noindent It is clear that such a construction will characterize a null solution to Maxwell equations in vacuum if we make the identifications
\begin{equation}\label{EMF}
F=-\Omega^{*}\mathfrak{f}^{1},\quad\quad\quad \star{F}=\Omega^{*}\mathfrak{f}^{2},
\end{equation}
 as $\mathfrak{f}^{1}$ and $\mathfrak{f}^{2}$ are closed and the top degree forms $F\wedge F$, $F\wedge \star F$ vanish due to relations (\ref{simp}).

The map $\Omega$ has an additional remarkable property: \textit{its preimages define a congruence of null geodesics in spacetime}. Indeed, writing $u^{A}=\Omega^{A}(x)$ $A=1,2,3$ and defining the pullbacks
\begin{equation}\nonumber
\omega^{1}_{a}=\partial_{a}\Omega^{A}X_{A}^{2}\quad\quad \omega^{2}_{a}=\partial_{a}\Omega^{A}X_{A}^{3}\quad\quad\tau^{1}_{a}=\partial_{a}\Omega^{A}X_{A}^{1}
\end{equation}
we obtain, in components:
\begin{equation}\label{tauomega}
\varepsilon_{ab}^{\phantom a\phantom a cd}\omega^{1}_{c}\omega^{2}_{d}=-\tau^{1}_{[a}\omega^{2}_{b]}.
\end{equation}
A carefull analysis of (\ref{tauomega}) implies in the following algebraic relations
\begin{eqnarray*}\nonumber
&&\omega^{1}\cdot \omega^{2}=0\quad\quad \omega^{1}\cdot \tau^{1}=0\quad\quad \omega^{2}\cdot \tau^{1}=0\\
&&\omega^{1}\cdot\omega^{1}<0\quad\quad\tau^{1}\cdot\tau^{1}<0\quad\quad \omega^{2}\cdot \omega^{2}=0,
\end{eqnarray*}
as expected for a null solution. Contraction of (\ref{tauomega}) with $\tau_{1}^{b}$ also gives
\begin{equation}\label{oomega}
\varepsilon^{abcd}\tau^{1}_{b}\omega^{1}_{c}\omega^{2}_{d}=(\tau^{1}\cdot\tau^{1})\omega_{2}^{a}.
\end{equation}
Now, for a point $p\in\Omega^{-1}(u)$, $u\in\mathbb{S}^{3}$, a vector $X\in T_{p}\mathbb{R}^{1+3}$ will be tangent to the preimage if and only if it belongs to $\mbox{ker}\ d\Omega$, which is one-dimensional due to the preimage theorem. Contracting both sides of (\ref{oomega}) with the Jacobian matrix $\partial_{a}\Omega^{A}$, one obtains
\begin{equation}
(\tau^{1}\cdot\tau^{1})\partial_{a}\Omega^{A}\omega_{2}^{a}=(\varepsilon^{abcd}\partial_{a}\Omega^{A}\partial_{b}\Omega^{B}\partial_{c}\Omega^{C}\partial_{d}\Omega^{D})X^{1}_{B}X^{2}_{C}X^{3}_{D}=0,
\end{equation}
which vanishes, since it is proportional to the push-forward of a four-dimensional anti-symmetric object $\Omega_{*}\varepsilon$ and $\tau^{1}\cdot\tau^{1}\neq 0$. One then concludes
\begin{equation}
\mbox{ker}\ d\Omega =\mbox{span}(\omega_{2})
\end{equation}
implying that the preimages are null curves in spacetime. It turns out that such curves are also shear-free geodesics and, therefore, define the Robinson congruence of the solution. Although the rigorous proof of this statement is not difficult it requires the use of bicovariant equations, which are outside the scope of this article (see, however, \cite{goulart}).

The fact that the preimages are null straight lines has an in important consequence in our analysis: The preimage $\mathcal{S}=\Omega^{-1}(\gamma)$ of any curve $\gamma$ in $\mathbb{S}^{3}$ is a \textit{ruled surface} in spacetime. Recall that a surface $\mathcal{S}$ is ruled (also called a scroll) if through every point of $\mathcal{S}$ there is a straight line that lies on $\mathcal{S}$. In this case, $\gamma$ is called the \textit{ruled surface directrix} and the null geodesic a \textit{director line}. Consequently, $\mathcal{S}$ can be described as the set of points swept out by a moving null director line along the a base curve. 

\subsection{From spacetime to the 2-sphere} Since the electromagnetic fields defined in this section are simple, they characterize a (2,2) dual foliation of spacetime. The maps $\phi,\ \theta:\mathbb{R}^{1+3}\rightarrow\mathbb{S}^{2}$ are given as the map compositions
\begin{equation}
\phi=\Pi_{1}\circ\Omega\quad\quad\quad \theta=\Pi_{2}\circ\Omega
\end{equation}
which satisfy the dual condition (\ref{dual condition}) by construction, with $N=\mathbb{S}^{2}$. Let us investigate the properties of these foliations for the Hopf-Ra\~nada solution. In this case, the map $\Omega$ satisfying all the conditions above is chosen as:\\
 \[
    \Omega=\left\{
                \begin{array}{ll}
                  u_1=(Ax-tz)/(A^2+t^2)\\
                  u_2=(Ay+t(A-1))/(A^2+t^2)\\
                  u_3=(Az+tx)/(A^2+t^2)\\
                  u_4=(A(A-1)-ty)/(A^2+t^2)\\
                \end{array}
              \right.
  \]\\
with $A=\frac{1}{2}(r^2-t^2+1)$. A direct computation gives the well-known complex `scalar fields'
\begin{equation}
\phi=\frac{(Ax-tz)+i(Ay+t(A-1))}{(Az+tx)+i(A(A-1)-ty)}\quad\quad \theta=\frac{(Ay+t(A-1))+i(Az+tx)}{(Ax-tz)+i(A(A-1)-ty)},
\end{equation}
from which, using (\ref{area}) and (\ref{EMF}), we obtain the electromagnetic fields
\begin{equation}
F=\frac{1}{2\pi i}\frac{d\overline{\phi}\wedge d{\phi}}{(1+\phi\overline{\phi})^{2}},\quad\quad\quad\star F=\frac{1}{2\pi i}\frac{d\theta\wedge d\overline{\theta}}{(1+\theta\overline{\theta})^{2}}.
\end{equation}

Although the explicit expressions for the fields are messy, the foliations $\mathrsfso{F}_{1}$ and $\mathrsfso{F}_{2}$ are easily described by the the sets $\phi^{-1}(c)$ and $\theta^{-1}(d)$, with $c,d\in\mathbb{S}^{2}$, which are obviously submanifolds of spacetime. The magnetic leaves intersect the electric leaves orthogonally and, as the solution is null, we have $\mbox{dim}(T_{p}\mathrsfso{F}_{1} \oplus T_{p}\mathrsfso{F}_{2})=3$ i.e., the leaves share a null vector. Alternatively, we may think in $\mathrsfso{F}_{1}$ and $\mathrsfso{F}_{2}$ in terms of the composed preimages $\Omega^{-1}(\Pi^{-1}_{1}(c))$ and $\Omega^{-1}(\Pi^{-1}_{2}(d))$, respectively. Since these are preimages of linked circles in $\mathbb{S}^{3}$, we conclude that the magnetic/electric leaves are linked null surfaces in spacetime. It is clear that almost all magnetic leaves are homeomorphic to cylinders $\mathbb{S}^{1}\times\mathbb{R}$, except one which is homeomorphic to the plane $\mathbb{R}^{2}$ (same for electric leaves). The latter is precisely the leaf mapped to the south pole in $\mathbb{S}^{2}$ i.e., the leaf `through spatial infinity'. In a sense this the opposite situation of the Reeb foliation of $\mathbb{S}^{3}$, where almost all leaves are non-compact except one core leaf homeomorphic to a torus. 

It turns out that the nontrivial topologies of the foliations are directly related to the invariants defined in Section \ref{Section III}. In the case of a null solution $\mathcal{H}_{1}$ and $\mathcal{H}_{2}$ split each in two independently conserved quantities. Hence, the magnetic and electric helicities
\begin{eqnarray*}
h_{11}&=&\int_{\Sigma}CS_{1}=\int_{\Sigma}F\wedge A=c_{11}\\
h_{22}&=&\int_{\Sigma}CS_{2}=\int_{\Sigma}\star F\wedge C=c_{22}\\
\end{eqnarray*}
for constants $c_{11}$, $c_{22}$, and the cross helicities (see, for instance, \cite{hoyos})
\begin{eqnarray*}
h_{12}&=&\int_{\Sigma}CS_{3}=\int_{\Sigma}F\wedge C=c_{12}\\
h_{21}&=&\int_{\Sigma}CS_{4}=\int_{\Sigma}\star F\wedge A=c_{21}
\end{eqnarray*}
for constants $c_{12}$, $c_{21}$ and $\Sigma$ any spacelike hypersurface homeomorphic to $\mathbb{R}^{3}$. These integrals are obviously gauge invariant when the integrals are over all space, since all fields are compactly supported. Also, as all these fields are actually pullbacks of non-singular quantities defined in $\mathbb{S}^{3}$, the (2,2) dual foliations inherit properties of the fibrations $\Pi_{k}$. Indeed, we have, for example
\begin{equation}
h_{11}=\int_{\Sigma}\Omega^{*}(\mathfrak{f}^{1}\wedge a^{1})=\int_{\mathbb{S}^{3}}\mathfrak{f}^{1}\wedge a^{1}
\end{equation}
with $\mathfrak{f}^{1}=-da^{1}$, which reduces to the well-known Whitehead integral. Here, the last equality follows from the fact that $\Omega$ may be interpreted as a one-parameter family of diffeomorphisms of $\Sigma$ to $\mathbb{S}^{3}/\{south\ pole\}$ and the integrand is compactly supported, which permits an Alexandrov compactification of $\Sigma$ to obtain $\mathbb{S}^{3}$.

As a final remark we note that, as a by-product of the framework described above we also obtain a third foliation $\mathrsfso{F}_{3}$. Its leaves are defined as the preimages of the map $\psi:\mathbb{R}^{1+3}\rightarrow\mathbb{S}^{2}$
\begin{equation}
\psi=\Pi_{3}\circ\Omega=\frac{(Az+tx)+i(Ax-tz)}{(Ay+t(A-1))+i(A(A-1)-ty)}
\end{equation}
Interestingly, the distribution $D_{3}$ which generates $\mathrsfso{F}_{3}$ is defined by the null vector $\omega_{2}$ and a third spacelike vector $\zeta_{1}$, simultaneously orthogonal to $\omega_{1}$ and $\tau_{1}$. Therefore, the Poynting vector is necessarily contained in $D_{3}$. We shall analyze the properties of $\mathrsfso{F}_{3}$ in a future communication.

\section{Discussion}\label{Section VII}

In this paper we have investigated simple solutions to Maxwell equations in vacuum from the perspective of foliation theory. To do so, we have defined the concept of a $(k, m-k)$ dual foliation and studied several aspects of their geometric structures. The machinery used in the paper emphasizes intrinsic properties of the leaves instead of observer dependent quantities - such as a time function, a local rest frame or a Cauchy hypersurface. We think this approach is more akin to the perspective of relativists and may motivate mathematicians to deal with this sort of problems.

After proving how a field line emerge from a (2,2) dual foliation and showing how submersions engender electromagnetic fields, we move on to the analysis of electromagnetic knots. According to our analysis, null knotted solutions may be interpreted in terms of two codimension-2 foliations, whose knotted leaves intersect non-transversely, but orthogonally, in spacetime. This is in accordance with previous results due to \cite{kedia2}. We then use the Hopf-Ra\~nada solution to illustrate how our method works and clarify several aspects of its construction. In particular, we show that the magnetic/electric leaves are linked ruled surfaces and how the linking appears in an observer independent fashion.

It is quite remarkable that electromagnetic knots may be studied without making any reference to field lines. Hence, we hope that the point of view adopted here  may serve as a unifying framework for understanding the connections between the four known different constructions of knotted solutions: manifold mappings, twistors, Bateman pairs, conformal. Until now, only few advances have been made in this direction, as stressed by \cite{bouwmeester}. Another exciting prospects are the following: i) explore the conformal invariance of the theory to obtain knotted solutions in conformally flat spaces; ii) construct new knotted solutions seeking for different orthogonal fibrations of $\mathbb{S}^{3}$, such as the Seifert fibration; iii) investigate how the phenomena of helicity exchange and line reconnection fits in the scenario of dual foliations. We shall investigate these problems in a forthcoming paper.

\subsubsection*{Acknowledgments:} This work was developed at IMPA-RJ and CAP-UFSJ and was supported by Capes-Brasil.
The first author would like thanks the staff of CAP-UFSJ, for the hospitality and support.

\bibliographystyle{amsalpha}
{}

\end{document}